\newtheorem{definition}{Definition}
\newtheorem{lemma}{Lemma}
\newtheorem{theorem}{Theorem}
\newtheorem{corollary}{Corollary}
\DeclareMathOperator{\diag}{diag}
\title{\LARGE \bf
Coordinating Distributed Energy Resources with Nodal Pricing in Distribution Networks: a Game-Theoretic Approach
}
\author{Eli Brock, Jingqi Li, Javad Lavaei, and Somayeh Sojoudi
\thanks{This material is based upon work supported in part by the U. S. Army Research Laboratory and the U. S. Army Research Office under grant number "W911NF2010219. It was also supported by the Office of Naval research under grant number N000142412673, as well as AFOSR, NSF, and the UC Noyce Initiative.}
\thanks{The authors are with the University of California, Berkeley}%
}
\begin{document}

\maketitle
\thispagestyle{empty}
\pagestyle{empty}

\begin{abstract}
We propose a real-time nodal pricing mechanism for cost minimization and voltage control in distribution networks with autonomous distributed energy resources. Unlike existing methods, the proposed pricing scheme does not require device-aware centralized coordination or communication between prosumers. The resulting market is naturally represented as a stochastic game where prosumers learn feedback control policies to optimize their individual rewards. By developing new sufficient conditions under which a stochastic game is a Markov potential game, we show that the problem of computing an equilibrium for the proposed model is equivalent to solving a single-agent Markov decision process. These new conditions are general and may apply to other applications. An equilibrium is computed for an IEEE test system to empirically demonstrate the near-optimal efficiency of the pricing policy.
\end{abstract}

\section{INTRODUCTION} \label{sec:intro}
\subsection{DER Coordination in Distribution Networks}
An additional 217 GW of distributed energy resources (DERs) is expected on the American electric power grid by 2028, a pace of growth similar to that of bulk generation capacity \cite{razdan_pathways_2025}. The widespread introduction of DERs, which include electric vehicles, heat pumps, storage systems, and distributed solar, marks a critical moment for our energy systems. If operated passively, the extra load from DERs will necessitate expensive infrastructure upgrades and new carbon-intensive fossil-based dispatchable generation. However, with efficient coordination, DER flexibility can improve efficiency on the grid by shaping electric demand to align with intermittent renewable supply and providing local voltage support \cite{alstone_2025_2017}.

There is extensive literature on market mechanisms and control strategies for DER coordination in distribution networks \cite{guerrero_towards_2020}. Coordinated approaches such as distributed optimal power flow, aggregators, and distribution locational marginal pricing (DLMP) assume that a third party orchestrates groups of DERs, either through direct load control or through (shadow)
price incentives, to optimize a single objective function \cite[Section 5]{guerrero_towards_2020}, \cite{chen_wholesale_2024,bai_distribution_2018}. In contrast, peer-to-peer (P2P) markets assume that prosumers control their own devices and optimize their own utility functions; as such, P2P is often analyzed using noncooperative game theory \cite[Section 6]{guerrero_towards_2020}, \cite{chen_energy_2023}. These frameworks make strong assumptions regarding future communication and incentive infrastructures. For example, P2P and distributed optimal power flow methods typically assume that neighboring devices exchange multiple rounds of communication before reaching a collective decision, while DLMP schemes require prosumers\footnote{In active distribution networks, a \textit{prosumer} is a customer who can both consume and produce energy, for example through a rooftop photovoltaic or vehicle-to-grid system.} to share device parameters with a coordinator and schedule future consumption in a rolling-horizon fashion.

Achieving any of the aforementioned frameworks at scale would constitute a major departure from the state of most modern power systems. In practice, distribution system operators (DSOs) usually do not have visibility behind the meter of their customers, and self-interested prosumers autonomously dispatch their own devices without communicating with their neighbors. A more practical mechanism, real-time pricing (RTP), coordinates DERs on the transmission scale by exposing them to time-varying substation-level nodal prices and allowing each prosumer's energy management system (EMS) to optimize their personal electricity usage \cite{mohsenian-rad_optimal_2010}. Most RTP work has focused on optimal control of individual devices as in \cite{wan_model-free_2019} or considers game-theoretic equilibria on the wholesale market as in \cite{ma_decentralized_2013}, without considering the implications for distribution networks.

We propose a new real and reactive nodal pricing structure for minimizing costs and stabilizing voltages on distribution networks. The framework can be understood as a natural extension of RTP to distribution nodes. Unlike traditional DLMPs, the nodal prices are set online, and prosumers do not need to share their device parameters with a DSO. Though the prosumers cannot communicate directly, the nodal pricing scheme allows for indirect coordination through the coupled prices. The resulting networked market is represented as a \textit{stochastic game} (SG) where prosumers attempt to learn closed-loop control policies in the face of uncertain wholesale prices and demand profiles \cite{shapley_stochastic_1953}. To the authors' knowledge, this is the first network-aware distribution-level DER coordination scheme that does not require prosumers to share any information, either between themselves or with a central operator. Next, we derive new, generalizable sufficient conditions under which an SG is a Markov potential game \cite{leonardos_global_2021}, allowing us to compute an equilibrium for the proposed market. Finally, we demonstrate on an IEEE test system that the proposed mechanism results in near-socially-optimal equilibrium policies, despite the potential suboptimality associated with the prosumers' market power.

\subsection{Notation}
$G(p)$ is the geometric distribution with success probability $p$. If a variable or function is defined as $y_i$, then $y$ refers to the vector or vector-valued function collecting all indices. If a set is defined as $\mathcal{X}_i$, then $\mathcal{X}$ refers to the Cartesian product over all indices. If $\mathcal{I}$ is a collection of indices, $y_{\mathcal{I}}$ refers to the elements of the vector $y$ indexed by $\mathcal{I}$. The subscript $-i$ indexes all components except $i$. $\diag(v)$ is the square matrix with the vector $v$ along the diagonal and zeros elsewhere. $\mathbf{1}$ is the vector of all ones, with dimension inferred from context.

\subsection{Stochastic Game Theory Preliminaries} \label{sec:sg}
An infinite-horizon stochastic game (SG) with set of agents $\mathcal{N}$ is a tuple $\mathcal{G}\coloneqq(\mathcal{S}, \{\mathcal{A}_i\}_{i\in\mathcal{N}}, \rho_0, T, \{U_i\}_{i\in\mathcal{N}}, \gamma, \{\Pi_i\}_{i\in\mathcal{N}})$ \cite{shapley_stochastic_1953}. $\mathcal{S}$ is a (possibly infinite) state space, $\mathcal{A}_i$ is the (possibly infinite) action space of agent $i$, $\rho_0:\mathcal{S}\rightarrow\mathbb{R}$ is an initial state distribution, $T:\mathcal{S}\times\mathcal{S}\times\mathcal{A}\rightarrow\mathbb{R}$ is the transition density, $U_i:\mathcal{S}\times\mathcal{A}\rightarrow\mathbb{R}$ is the reward function of agent $i$, $\gamma\in[0,1]$ is the discount factor, and $\Pi_i$ is the set of agent $i$'s available policies. At time $t=0$, the initial state $s_0$ is drawn from the initial state distribution $\rho_0$. At each time $t$, each agent $i$ chooses an action $a_i^t$. Based on the current state and actions, each agent $i$ receives a deterministic reward $U_i(s^t,a^t)$ and the game transitions to the next state according to the transition density function: $s^{t+1} \sim T(\cdot|s^t,a^t)$.

Agent $i$ aims to maximize its infinite-horizon discounted rewards, defined by its \textit{value function}
\begin{equation} \label{eq:value_def}
    V_i^\pi \coloneqq \mathbb{E}_\pi\left[\sum_{t=0}^\infty \gamma^tU_i(s^t,a^t)\right].
\end{equation}
The notation $\mathbb{E}_\pi$ is shorthand for the expectation with respect to $s^0\sim\rho_0$, $a_i^t\sim\pi_i(\cdot|s^t)$ and $s^{t+1} \sim T(\cdot|s^t,a^t)$, where $\pi_i\in\Pi_i$ is agent $i$'s policy. We are now ready to define \textit{Nash equilibria}, the most common solution concept for SGs.

\begin{definition}[Nash Equilibrium \cite{basar_dynamic_1998}]
    A joint policy profile $\pi\in\Pi$ is called a Nash equilibrium of $\mathcal{G}$ if
    \[
        V_i^{\pi_i, \pi_{-i}} \geq V_i^{\tilde{\pi}_i, \pi_{-i}}
    \]
    for all $i\in\mathcal{N}$ and $\tilde{\pi}\in\Pi$.
\end{definition}

In the remainder of the paper, we will use ``equilibrium'' and ``Nash equilibrium'' interchangeably. Intuitively, an equilibrium is a configuration of policies where no agent can unilaterally improve their payoff by changing their policy if all other agents keep their policy fixed. We will often consider \textit{parametric} policy sets of the form $\Pi_i=\{\pi_i^{\theta_i}(\cdot|s^t,a^t): \theta\in\mathbb{R}^{n_\theta}\}$ where $n_\theta$ is the number of parameters. For compact notation, when $\Pi$ is parametric, we will often use $\theta$ in place of $\pi^\theta$. For parametric policy sets, we can also define the notion of a \textit{local} Nash equilibrium.
\begin{definition}[Local Nash Equilibrium]
    If $\Pi$ is a parametric policy class, a set of policy parameters $\theta\in\mathbb{R}^{n_\theta}$ is called a local Nash equilibrium of $\mathcal{G}$ if there exists a $\nu>0$ such that
    \[
        V^{\theta_i, \theta_{-i}} \geq V^{\tilde{\theta}_i, \theta_{-i}}
    \]
    for all $\tilde{\theta}\in\mathbb{R}^{n_\theta}$ such that $\|\tilde{\theta} - \theta\| \leq \nu$.
\end{definition}

Going forward, we will assume that the value functions $V^{\theta_i}$ of parameterized SGs are continuous in $\theta_i$ and differentiable in $\theta_i$ almost everywhere. Local Nash equilibria are stationary points under the \textit{gradient play} algorithm, in which each agent applies the update rule 
\begin{equation} \tag{GP} \label{eq:gp}
    \left(\theta_i^{(k+1)} - \theta_i^{(k)}\right) \propto \nabla_{\theta} V_i^{\theta^{(k)}}.
\end{equation}

An SG with a single agent is known as a \textit{Markov Decision Process} (MDP) and a (local) equilibrium of an MDP is called a (locally) \textit{optimal policy}.

Notice that our definition of SGs explicitly includes the set of feasible policies $\Pi$. Most existing literature on SGs (and MDPs) omits this specification, implicitly assuming that $\Pi$ includes all stochastic policies. For large or infinite state and action spaces, policies are often parameterized in order to design tractable solution methods. However, these parameterizations are treated as function approximations that induce some suboptimality due to their limited expressiveness \cite{agarwal_theory_2021}. While we also use $\Pi$ to encode finite-dimensional parameterizations, in this work, the policy sets are also restricted to capture constraints on the information structure between agents (see Section \ref{sec:policies}). We only seek to find equilibria with respect to the specified policy set without considering the relationship between said equilibria and the equilibria over all stochastic policies.

Computing local equilibria for SGs is hard in general. Unlike in single-objective optimization problems, many natural algorithms such as \ref{eq:gp} may cycle instead of converging to a stationary point \cite{mazumdar_policy-gradient_2020}. In the following, we will use the stochastic game framework to model a distribution grid under a new nodal pricing coordination mechanism (Section \ref{sec:model}). Then, we will develop theory inspired by this application to show that our model belongs to a subclass of SGs for which equilibria can be tractably computed (Section \ref{sec:mpg}).

\section{MODEL} \label{sec:model}
We develop a stochastic game model of a distribution grid with dynamic nodal pricing and autonomous DERs. The proposed market is designed to achieve efficient outcomes for the grid without central coordination or prosumer-to-prosumer communication, as discussed in Section~\ref{sec:intro}.

\subsection{Grid Model} \label{sec:grid_model}
A radial distribution grid is modeled as a directed acyclic graph with a root node $0$ and a set of non-root nodes $\mathcal{N}$. Denote the full set of nodes as $\mathcal{N}^+=\{0\}\cup\mathcal{N}$ and the set of lines and transformers (directed edges) $\mathcal{L}\in\mathcal{N}^+\times\mathcal{N}^+$. By convention, edges are oriented away from the root node. The linear DistFlow model is, for all $i \rightarrow j\in\mathcal{L}$, 
\begin{subequations} \label{eq:distflow}
\begin{align}
    P_{ij} &= p_j + \sum_{k:j \rightarrow k}P_{jk} \\
    Q_{ij} &= q_j + \sum_{k:j \rightarrow k}Q_{jk} \\
    v_i - v_j &= r_{ij}P_{ij} + x_{ij}Q_{ij}
\end{align}
\end{subequations}
where $p_i$, $q_i$ are the real and reactive power consumption of node $i$, $P_{ij}$ and $Q_{ij}$ are the real and reactive power flows on line $i \rightarrow j$, and $v_i$ is the voltage magnitude at node $i$ \cite{baran_network_1989}. The voltage magnitude at the substation $v_0$ is fixed and constant.  Solving the system of equations \eqref{eq:distflow} for $P$, $Q$, and $v$ reveal these quantities to be linear functions of $(p,q)$, where $p$ and $q$ are vectors collecting the nodal injections at the non-root nodes $\mathcal{N}$. We define the matrices $H$, $R$, and $X$ to represent \eqref{eq:distflow} in the compact form
\begin{subequations} \label{eq:distflow_compact}
\begin{align}
    P &= Hp \\
    Q &= Hq \\
    v &= \mathbf{1}v_0 + Rp + Xq.
\end{align}
\end{subequations}
We will continue to use $P$, $Q$, and $v$ to denote these functions, leaving the dependence on $(p,q)$ implicit. At each time, the load profiles $p$ and $q$ are autonomously determined by the prosumers and the DSO incurs the cost
\begin{align} \label{eq:dso_cost}
    C(p, q, \lambda) &= (1-w)\lambda\left(\sum_{i\in\mathcal{N}}p_i + \sum_{i\rightarrow j\in\mathcal{L}}r_{ij}\left(P_{ij}^2+Q_{ij}^2\right)\right) \\
    &+ w\sum_{i\in\mathcal{N}}(v_i-v_0)^2 \nonumber
\end{align}
where $\lambda$ is the wholesale locational marginal price (LMP) at the substation node and $w\in[0,1]$ is a given parameter. The first term in \eqref{eq:dso_cost}, given by the price multiplied by the sum of the loads and the approximate real power losses, captures the cost of importing real power from the wholesale market. While the true real power loss on line $i \rightarrow j$ is $r_{ij}(P_{ij}^2 + Q_{ij}^2)/v_i^2$, we use the fact that $v_i\approx1$ under normal operating conditions to approximate the losses in \eqref{eq:dso_cost}. Since the losses do not appear in the linear model \eqref{eq:distflow}, the DSO cost \eqref{eq:dso_cost} does not account for the small fraction of losses incurred on each line due to losses on downstream lines. We also assume that the distribution network is a price-taker, meaning that it is a small enough participant in the wholesale market that it cannot affect $\lambda$. The second term penalizes deviations from the nominal voltage as in the voltage control literature \cite{zhu_fast_2016}. For simplicity, we assume the nominal voltage across the network is equal to the substation voltage $v_0$. The voltage control weight $w$ controls the trade-off between cost minimization and voltage control, and can be tuned by the DSO until voltages are within an acceptable range.

On distribution networks, the DSO typically handles line ampacity limits through network reconfiguration. Given the complexity introduced by time-varying network topologies, we do not consider line limits; however, this is an important direction for future work.

\subsection{Prosumer Model} \label{sec:prosumer_model}
We model a single prosumer at every non-substation bus, so the set of prosumers is also $\mathcal{N}$. Multiple prosumers per bus may also be handled by the model without affecting the theory. Prosumer $i\in\mathcal{N}$ exhibits inelastic real and reactive power demand $(\bar{p}_i, \bar{q}_i)$ representing the sum of inflexible consumption from devices other than DERs, such as kitchen appliances, lighting, and most other plug loads. Prosumer $i$ also owns a set of flexible DERs $\mathcal{N}_i$ and an EMS enabling automatic intelligent control. These DERs may include heat pumps, electric vehicles (EVs), and energy storage devices. Denoting the consumption from DER $j\in\mathcal{N}_i$ belonging to prosumer $i$ as $\Tilde{p}_{i,j}$, the load from prosumer $i$ is given by the sum of their inelastic and flexible demand:
\begin{equation} \label{eq:load}
(p_i, q_i) \coloneqq (\bar{p}_i, \bar{q}_i) + \sum_{j\in\mathcal{N}_i}(\tilde{p}_{i,j}, \tilde{q}_{i,j}).
\end{equation}
The DERs also have temporal state dynamics:
\begin{equation} \label{eq:der_dynamics}
d^{t+1}_i = f_i(d_i^t, \tilde{p}_i^t, \tilde{q}_i^t, \omega_i^t)
\end{equation}
where $d_i$ is a vector collecting the states of the DERs belonging to prosumer $i$ and $\omega_i^t$ is a random perturbation that will be further discussed in section \ref{sec:exogenous}. Here we introduce the subscript $t$ to index discrete time. The quantities $p$, $q$, $P$, $Q$, $\lambda$, and $v$ also vary in time---the superscript was previously omitted for clarity. The state dynamics $f$ may represent, for example, the state-of-charge of a storage unit/EV or the air/water temperature for a heat pump. The states serve to constrain the DER consumption at each stage:
\begin{equation}\label{eq:der_constraints}
(\tilde{p}_i^t, \tilde{q}_i^t)\in\mathcal{P}_i(d_i^t)
\end{equation}
where we assume $\mathcal{P}_i(\cdot)$ is the feasible set for agent $i$ given $d_i$. Constraint \eqref{eq:der_constraints} may encode state-of-charge, inverter capacity, or comfort constraints.

\subsection{Pricing Mechanism} \label{sec:pricing}
The DSO sets real-time nodal prices for real and reactive power equal to the marginal cost of serving the load at each bus. We assume a \textit{net metering} policy, meaning agents are charged the same rate for net consumption as they are credited for net generation. Specifically, prosumer $i$'s reward function is given by
\begin{align} \label{eq:nodal_price}
    &U_i(d_i,\lambda,\tilde{p},\tilde{q},\bar{p},\bar{q}) \coloneqq u_i(d_i,\tilde{p}_i,\tilde{q}_i) \\
    &- p_i\frac{\partial}{\partial p_i}C(p,q,\lambda) - q_i\frac{\partial}{\partial q_i}C(p,q,\lambda) \nonumber
\end{align}
where $u_i$ is prosumer $i$'s utility function, or their instantaneous benefit from the state-action configuration of their DERs. $u_i$ may encode indoor air temperature preferences or battery degradation costs; for purely shiftable loads, $u_i$ may be set to zero. The last two terms in \eqref{eq:nodal_price} are prosumer $i$'s payment to the DSO. The pricing mechanism in \eqref{eq:nodal_price} is inspired by the DLMP literature \cite{bai_distribution_2018, papavasiliou_analysis_2018}. In these prior works, the DSO computes the prices over a rolling horizon by solving a multi-period scheduling problem for all the devices in the network. By contrast, the prices in \eqref{eq:nodal_price} are set online for the current time period and only require the DSO to meter the current aggregate consumption at each node.

The nodal prices are composed of three components associated with the three terms in \eqref{eq:dso_cost} after expansion: the energy price, the losses price, and the voltage price. The energy price, given by the substation LMP for real power and zero for reactive power, is constant across nodes.  Each prosumer's losses and voltage prices, however, depend on their own consumption as well as that of all agents who share a common ancestor on the network. This coupling introduces strategic interactions and some \textit{market power}, that is, the ability of a participant in a market to manipulate the price. When participants have market power, market equilibria may not maximize social welfare as they do in the fully competitive case. For \eqref{eq:nodal_price} to be an effective coordination mechanism, the gap between the socially optimal outcome and the equilibrium outcome, known as the \textit{price-of-anarchy}, should be small. We empirically verify this in Section \ref{sec:case_study}.

\subsection{Exogenous Quantities} \label{sec:exogenous}
We propose a time-invariant state-space model for the exogenous variables $\lambda$, $\bar{p}$, and $\bar{q}$:
\begin{subequations} \label{eq:exogenous}
\begin{align}
    \label{eq:ex_dynamics}\alpha^{t+1}_i &= g_i(\alpha_i^t,\xi_i^{t+1}) & \forall i\in\{0\}\cup\mathcal{N} \\
    \label{eq:lambda}\lambda^t &= m_0(\alpha_0^t) \\
    \label{eq:demand}(\bar{p}_i^t, \bar{q}_i^t) &= m_i(\alpha_i^t) & \forall i\in\mathcal{N} \\\
    \label{eq:noise}(\xi^t, \omega^t) &\sim \rho_{\xi,\omega}
\end{align}
\end{subequations}
where $g$ is the transition function, $m$ is the measurement function, and $\rho_{\xi, \omega}$ is the noise distribution. The framework $\eqref{eq:exogenous}$ encompasses a rich class of models while satisfying the Markov property. Models of the form \eqref{eq:exogenous} include seasonal autoregressive integrated moving average models, which are common for forecasting time-series econometric data such as electricity demand and prices \cite{durbin_time_2012}. Since the distribution network is assumed to be a price-taker in the wholesale market, we use the simple black-box model in \eqref{eq:exogenous} to capture high-level econometric dynamics instead of including a full optimal-power-flow-based transmission model.

While each prosumer's inelastic demand $\bar{p}$ and the LMP $\lambda$ have separable dynamics in \eqref{eq:exogenous}, they are coupled through the joint distribution $\rho_{\xi,\omega}$, which generates the noise $\xi$. $\xi$ can represent both independent regressors, such as weather, and decoupled perturbations that may simultaneously affect the wholesale price and the demand at different buses. Moreover, since the DER dynamics noise $\omega$ is also generated by $\rho_{\xi, \omega}$, it is correlated with $\xi$ in general. For example, if the DER state is the indoor air temperature, its evolution will be subject to the same perturbations as the ambient outdoor temperature.

\subsection{Policies} \label{sec:policies}
Just as the DSO lacks knowledge of the behind-the-meter devices of its customers, prosumers also cannot see behind the meter of their neighbors. Specifically, we assume that prosumer $i$ observes only their local DER state $d_i$ and has knowledge of their own inelastic demand $(\bar{p}_i, \bar{q}_i)$ and the wholesale LMP $\lambda$, which we assume is public\footnote{The California Independent System Operator, for example, publishes the substation-level real-time-market-clearing LMPs on its website.}. By ``has knowledge'', we mean that the prosumer $i$'s EMS can access all the information it needs to make the best possible prediction of the next quantities $\bar{p}^{t+1}_i$, $\bar{q}^{t+1}_i$ and $\lambda^{t+1}$. In practice, this may include recent histories, periodic information related to the time of day, day-ahead forecasts, and correlated data such as weather---all of which would be accessible to a cloud-connected controller. For our model, given the Markovian structure of \eqref{eq:exogenous}, we achieve the desired information structure by simply allowing agent $i$ to condition its policy directly on $\alpha_i$ and $\alpha_0$, specifically
\begin{equation} \label{eq:local_policies}
a_i^t = \mu_i^{\theta_i}(d_i^t,\alpha_i^t,\alpha_0^t,\eta_i^t)
\end{equation}
where $a_i^t\in\mathbb{R}^2$ is agent $i$'s action, $\eta_i^t\sim\rho_{\eta_i}$ is the policy noise and $\mu_i^{\theta_i}$ is prosumer $i$'s policy, parameterized by $\theta^i$. The generative model \eqref{eq:exogenous}, combined with the local policies \eqref{eq:local_policies}, abstracts away the choice of specific information on which real EMSs might condition their policies and allows for clean theoretical analysis. Optimizing the performance of DERs in a more realistic context without assuming Markovian states and allowing for incomplete observations is an important direction for future research, but falls outside the scope of this work.

To handle the constraint \eqref{eq:der_constraints}, we define the unbounded action spaces $\mathcal{A}_i\coloneqq\mathbb{R}^{2|\mathcal{N}_i|}$, $\forall i\in\mathcal{N}$, and compute the DER consumption by mapping the action onto the feasible set:
\begin{equation} \label{eq:mapping}
(\tilde{p}_i^t, \tilde{q}_i^t) = M_{i}(a_i^t, d_i^t),
\end{equation}
where $M_i:\mathbb{R}^{2|\mathcal{N}_i|}\rightarrow\mathbb{R}^{2|\mathcal{N}_i|}$ is an appropriately chosen function satisfying $M_i(a_i,d_i)\in\mathcal{P}_i(d_i)$ for all possible $a_i$, $d_i$. Depending on the application, $M$ might be a clipping or projection operation. We assume that $M$ is differentiable almost everywhere and continuous.

As an alternative to the local policies \eqref{eq:local_policies}, one could also consider a partially observable stochastic game paradigm, where policies are conditioned on the history of agent $i$'s observations. However, since any given agent may be unaware of their neighbors, it is unlikely that realistic EMSs would attempt to characterize the state of other prosumers' devices. Therefore, we argue that the local policy parameterization \eqref{eq:local_policies} is appropriate for this setting.

\subsection{Stochastic Game Formulation}
We are now ready to formally express the proposed model as an SG, which we call $\mathcal{D}$. The state vector is $s=(x,\alpha)$. The transition density $T$ of $\mathcal{D}$ is characterized by \eqref{eq:der_dynamics}, \eqref{eq:ex_dynamics}, and \eqref{eq:noise}. The reward functions $U_i$, $i\in\mathcal{N}$ are given in \eqref{eq:nodal_price}. The (parametric) joint policy set $\Pi$ is defined according to \eqref{eq:local_policies}. The discount factor $\gamma$ and initial state distribution $\rho_0$ are case-specific.

\section{MARKOV POTENTIAL GAMES} \label{sec:mpg}
In this section, we introduce a class of well-behaved SGs known as Markov Potential Games (MPGs) that admit tractable algorithms for computing equilibria \cite{leonardos_global_2021}. We then present new, generalizable sufficient conditions under which an SG is an MPG and show that the game $\mathcal{D}$ introduced in Section \ref{sec:model} satisfies these conditions.

MPGs generalize the notion of \textit{potential games} in static game theory to SGs \cite{monderer_potential_1996}.
\begin{definition}[Markov Potential Game \cite{zhang_gradient_2024}] \label{def:mpg}
    An SG $\mathcal{G}$ is a Markov potential game if there exists a \textit{potential function} $\phi:\mathcal{S}\times\mathcal{A}\rightarrow\mathbb{R}$ such that
    \[
    V_i^{\pi_i,\pi_{-i}} - V_i^{\tilde{\pi}_i,\pi_{-i}} = \Phi^{\pi_i,\pi_{-i}} - \Phi^{\tilde{\pi}_i,\pi_{-i}}
    \]
    where 
    \[
    \Phi^\pi \coloneqq \mathbb{E}_\pi\left[\sum_{t=0}^\infty\gamma^t\phi(s^t,a^t)\right]
    \]
    for all $i\in\mathcal{N}$ and $\pi,\tilde{\pi}\in\Pi$.
\end{definition}

In other words, an MPG is an SG where each agent's value function is characterized by a single \textit{potential value function} $\Phi^\pi$, given by the discounted sum of the potential function $\phi$.
An immediate consequence of Definition \ref{def:mpg} is that joint policy profiles that (locally) optimize the potential function $\Phi^\pi$ are also (local) Nash equilibria.

If a game is an MPG and the potential function $\phi$ is known, then finding a (local) equilibrium is reduced to (locally) solving a single-agent MDP for the optimal joint policy with rewards given by $\phi$. Unlike multi-agent SGs, MDPs can be reliably solved by methods from reinforcement learning and dynamic programming. In particular, for parameterized policy classes, a local equilibrium of an MPG can be found by the gradient ascent algorithm
\begin{equation} \tag{$\Phi$-GA} \label{eq:ga}
    \left(\theta^{(k+1)} - \theta^{(k)}\right) \propto \nabla_{\theta} \Phi^{\theta^{(k)}}.
\end{equation}
Definition \ref{def:mpg} implies that, for MPGs, \ref{eq:ga} is equivalent to the decentralized gradient play algorithm \ref{eq:gp}; see \cite[Proposition B.1]{leonardos_global_2021} for details.

Definition \ref{def:mpg} is challenging to verify. It may be natural to suspect that an SG is an MPG if it is potential at each stage, that is, if there exists $\phi$ such that
 \begin{equation} \label{eq:stagewise}
 U_i(s,a_i,a_{-i}) - U_i(s,\tilde{a}_i,a_{-i}) = \phi(s,a_i,a_{-i}) - \phi(s,\tilde{a}_i,a_{-i})
 \end{equation}
 for all $i\in\mathcal{N}$, $a,\tilde{a}\in\mathcal{A}$, and $s\in\mathcal{S}$. Unfortunately, this is not true; see \cite[Proposition 2]{zhang_gradient_2024} for a counterexample. SGs satisfying \eqref{eq:stagewise} are, however, the most promising candidates for MPGs since static game theory techniques may be used to find a candidate potential function $\phi$ given the rewards $U_i$. As such, the existing literature has focused on obtaining sufficient conditions under which SGs satisfying \eqref{eq:stagewise} are MPGs \cite{leonardos_global_2021, zhang_gradient_2024, mguni_learning_2021}.

\cite{mguni_learning_2021} claims to establish a sufficient condition for MPGs under an assumption, called ``state transitivity'', that the reward functions are also potential with respect to the state. This assumption is quite restrictive and is not satisfied by the game of interest $\mathcal{D}$. \cite[Prop 3.2, C1]{leonardos_global_2021} and \cite[Lemma 8]{zhang_gradient_2024} both introduce what we call the \textit{action-independent transitions} (AIT) sufficient condition requiring that the transition density is independent of the agents' actions. However, this restriction precludes any controllable state dynamics and is only satisfied by repeated one-shot games. \cite[Prop 3.2, C2]{leonardos_global_2021} generalizes AIT, but it is difficult to check and is specific to SGs with finite state and action spaces.

\cite[Lemma 8]{zhang_gradient_2024} introduces what we call the \textit{local states and policies} (LSP) sufficient condition. LSP holds when 1) each state is owned by a certain agent, 2) the rewards are potential in states as well as actions, 3) each local state space has its own conditionally independent transition density, and 4) each agent's policy is conditioned only on its local state. LSP does not apply to $\mathcal{D}$ because the state space includes the component $\alpha_0$ that does not belong to any agent, and the agents' local state transitions are coupled through the joint distribution $\rho_{\xi,\omega}$.

While neither AIT nor LSP applies directly to $\mathcal{D}$, $\mathcal{D}$ exhibits features of each. Similar to AIT, the exogenous states $\alpha$ evolve independently of the agents' actions. As in LSP, the agents condition their policies on private local states with transition densities that are independent of other agents' actions. We now present a new, verifiable sufficient condition generalizing both AIT and LS that applies to the proposed distribution system game $\mathcal{D}$.

\begin{theorem} \label{thm:main}
    An SG $\mathcal{G}$ is an MPG with potential function $\phi$ if, for each agent $i\in\mathcal{N}$, there exists a \textit{local state space} $\mathcal{S}_i$ such that $\mathcal{S}=\mathcal{S}_i\times\mathcal{S}_{-i}$ and
    \begin{enumerate}
        \item\label{it:stagewise} $\phi$ is a stagewise potential function for both the states and actions, that is
        \begin{align} \label{eq:stagewise_potential}
        &U_i(s_i,s_{-i},a_i,a_{-i}) - U_i(\tilde{s}_i,s_{-i},\tilde{a}_i,a_{-i}) \\
        &= \phi(s_i,s_{-i},a_i,a_{-i}) - \phi(\tilde{s}_i,s_{-i},\tilde{a}_i,a_{-i}) \nonumber
        \end{align}
        for all $s_i,\tilde{s}_i\in\mathcal{S}_i$, $s_{-i}\in\mathcal{S}_{-i}$, and $a,\tilde{a}\in\mathcal{A}$.
        \item\label{it:independent_policy} Other agents' policies do not depend on the local states, that is $\pi_{-i}(a_{-i}|s_i,s_{-i})=\pi_{-i}(a_{-i}|s_{-i})$.
        \item\label{it:marginal_transition} The marginal transition density of the non-local states, defined as
        \[
        T_{-i}(s_{-i}'|s,a)\coloneqq\int_{s_i'\in\mathcal{S}_i}T(s_i',s_{-i}'|s,a)ds_i'
        \]
        does not depend on agent $i$'s action or local state, that is $T_{-i}(s_{-i}'|s_i,s_{-i},a_i,a_{-i}) = T_{-i}(s_{-i}'|s_{-i},a_{-i})$.
    \end{enumerate}
\end{theorem}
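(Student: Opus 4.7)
The plan is to reduce the claim to a statement about marginal trajectory distributions. The first step is to exploit condition \ref{it:stagewise}: because the difference $U_i(s,a) - \phi(s,a)$ is invariant when we change $(s_i, a_i)$ with $(s_{-i}, a_{-i})$ fixed, there must exist a function $h_i(s_{-i}, a_{-i})$ such that $U_i(s,a) = \phi(s,a) + h_i(s_{-i}, a_{-i})$. Substituting into \eqref{eq:value_def} and using linearity of expectation yields the decomposition
\begin{equation*}
V_i^\pi = \Phi^\pi + \mathbb{E}_\pi\left[\sum_{t=0}^\infty \gamma^t h_i(s_{-i}^t, a_{-i}^t)\right].
\end{equation*}
The theorem will follow immediately if I can show that the second term is invariant to agent $i$'s policy, i.e., it takes the same value under $(\pi_i, \pi_{-i})$ and $(\tilde{\pi}_i, \pi_{-i})$.

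The second step is to argue that the marginal joint distribution of the trajectory $(s_{-i}^t, a_{-i}^t)_{t\geq 0}$ does not depend on $\pi_i$. I plan to do this by induction on $t$, showing that the marginal law of $(s_{-i}^t, a_{-i}^t)$ under $\pi$ is determined entirely by the initial marginal of $s_{-i}^0$ under $\rho_0$, the transition kernel $T_{-i}$, and the policy $\pi_{-i}$. The base case is immediate since $s_{-i}^0$ is drawn from the marginal of $\rho_0$, and condition \ref{it:independent_policy} ensures $a_{-i}^0 \sim \pi_{-i}(\cdot | s_{-i}^0)$ independently of $s_i^0$ and $\pi_i$. For the inductive step, I will condition on $(s_{-i}^t, a_{-i}^t)$ and integrate out $s_i^t$ and $a_i^t$: condition \ref{it:marginal_transition} shows that $s_{-i}^{t+1}$ is sampled from $T_{-i}(\cdot | s_{-i}^t, a_{-i}^t)$ regardless of $s_i^t, a_i^t$, and then condition \ref{it:independent_policy} again gives $a_{-i}^{t+1} \sim \pi_{-i}(\cdot | s_{-i}^{t+1})$. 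Hence the entire marginal law is $\pi_i$-invariant, so $\mathbb{E}_\pi\!\left[\sum_t \gamma^t h_i(s_{-i}^t, a_{-i}^t)\right]$ depends only on $\pi_{-i}$, completing the proof.

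I expect the trickiest technical step to be the inductive handling of the marginalization in the presence of possibly infinite or continuous state spaces. Care is needed to write the one-step transition of the marginal as $\int_{\mathcal{S}_i \times \mathcal{A}_i} T_{-i}(s_{-i}^{t+1} | s_i^t, s_{-i}^t, a_i^t, a_{-i}^t)\, p(s_i^t, a_i^t | s_{-i}^t, a_{-i}^t)\, d(s_i^t, a_i^t)$, and then use condition \ref{it:marginal_transition} to pull $T_{-i}(s_{-i}^{t+1}|s_{-i}^t, a_{-i}^t)$ out of the integral. Analogous care is needed when marginalizing to get $a_{-i}^{t+1}$: one conditions on $s^{t+1}$, uses condition \ref{it:independent_policy} to drop the dependence on $s_i^{t+1}$, and then integrates out $s_i^{t+1}$. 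Once this bookkeeping is done rigorously, the rest of the argument is a straightforward substitution, and the conclusion $V_i^{\pi_i, \pi_{-i}} - V_i^{\tilde{\pi}_i, \pi_{-i}} = \Phi^{\pi_i, \pi_{-i}} - \Phi^{\tilde{\pi}_i, \pi_{-i}}$ matches Definition \ref{def:mpg} verbatim.
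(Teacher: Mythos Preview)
Your proposal is correct and follows essentially the same approach as the paper: the paper also defines the ``dummy function'' $\psi_i(s,a)=U_i(s,a)-\phi(s,a)$ (your $h_i$), uses condition~\ref{it:stagewise} to show it depends only on $(s_{-i},a_{-i})$, decomposes $V_i^\pi=\Phi^\pi+\mathbb{E}_\pi[\sum_t\gamma^t\psi_i(s_{-i}^t,a_{-i}^t)]$, and then proves by induction on $t$ (using conditions~\ref{it:independent_policy} and~\ref{it:marginal_transition} exactly as you describe) that the marginal law of $(s_{-i}^t,a_{-i}^t)$ is independent of $\pi_i$. The only cosmetic difference is that the paper inducts on the state marginal $\Pr_\pi(s_{-i}^t=s_{-i})$ and then appends the action via $\pi_{-i}$, whereas you induct directly on the pair $(s_{-i}^t,a_{-i}^t)$; these are equivalent.
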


\begin{proof}
    See appendix.
\end{proof}

Theorem 1 generalizes LSP by allowing the agents' non-local state spaces to overlap and by relaxing the requirement that the local state transition probabilities be conditionally independent. While Theorem \ref{thm:main} is motivated by the power systems setting considered here, it is quite general and may extend to other applications. The idea of the proof is to show that the difference between the potential value function and agent $i$'s reward is independent of its policy. At a high level, this is achieved by combining and generalizing the fundamental features of AIT and LSP.

Referring back to Section \ref{sec:model}, we apply Theorem \ref{thm:main} to show that $\mathcal{D}$ is an MPG. From \eqref{eq:distflow_compact}, define the matrix
\begin{align*}
   L(\lambda) &= \lambda(1-w)\begin{bmatrix}
        H^T\diag(r)H &  \mathbf{0} \\
        \mathbf{0} & H^T\diag(r)H
    \end{bmatrix} \\
    &+ w\begin{bmatrix}
        R^T \\ X^T
    \end{bmatrix}\begin{bmatrix}
        R & X
    \end{bmatrix}
\end{align*}

\begin{corollary} \label{cor:application}
$\mathcal{D}$ is an MPG with potential function 
    \[
    \phi_{\mathcal{D}}(d,a,\lambda)=\sum_{i\in\mathcal{N}}u_i(d_i,\tilde{p}_i,\tilde{q}_i) - \tilde{C}(p,q,\lambda)
    \]
where 
\begin{equation} \label{eq:potential_social_cost}
    \tilde{C}(p,q,\lambda) = C(p,q,\lambda) + \sum_{i\in\mathcal{N}}\begin{bmatrix}
        p_i \\ q_i
    \end{bmatrix}^TL(\lambda)_{\mathcal{I}_i\mathcal{I}_i}\begin{bmatrix}
        p_i \\ q_i
    \end{bmatrix}
\end{equation}
and $\mathcal{I}_i$ = $\{i,i+|\mathcal{N}|\}$ for all $i\in\mathcal{N}$.
\end{corollary}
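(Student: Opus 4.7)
The plan is to apply Theorem~\ref{thm:main} with the local state $s_i = (d_i, \alpha_i)$ and non-local state $s_{-i} = (d_{-i}, \alpha_{-i}, \alpha_0)$. Two of the three hypotheses are immediate from the model specification. Condition~(\ref{it:independent_policy}) holds because every policy $\mu_j^{\theta_j}(d_j, \alpha_j, \alpha_0, \eta_j)$ in \eqref{eq:local_policies} depends only on components of $s_{-i}$ whenever $j \neq i$. Condition~(\ref{it:marginal_transition}) holds because \eqref{eq:der_dynamics} and \eqref{eq:exogenous} express each of $d_j^{t+1}$, $\alpha_j^{t+1}$, and $\alpha_0^{t+1}$ for $j \neq i$ as a function of only $s_{-i}^t$, $a_{-i}^t$, and the noise components $(\xi_{-i}, \xi_0, \omega_{-i})$; marginalizing $(\xi_i, \omega_i)$ out of $\rho_{\xi,\omega}$ yields a fixed marginal on the remaining noises that is independent of $(s_i, a_i)$.

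The main step is to verify~(\ref{it:stagewise}), which reduces to showing that $\phi_{\mathcal{D}} - U_i$ is a function of $(s_{-i}, a_{-i})$ only. Rewriting $C$ via \eqref{eq:distflow_compact} gives the compact form $C(p,q,\lambda) = (1-w)\lambda\, \mathbf{1}^T p + z^T L(\lambda) z$ with $z = [p;q]$; Euler's identity on the quadratic part then yields $p_i\, \partial_{p_i} C + q_i\, \partial_{q_i} C = (1-w)\lambda p_i + 2\, z_{\mathcal{I}_i}^T L(\lambda)_{\mathcal{I}_i,:}\, z$. Substituting this and the definition of $\tilde{C}$ into $\phi_{\mathcal{D}} - U_i$, the $u_i$ terms cancel, and what remains is $\sum_{j \neq i} u_j$, the linear remainder $-(1-w)\lambda\, \mathbf{1}_{-i}^T p_{-i}$, and the quadratic remainder $-z^T L z - \sum_j z_{\mathcal{I}_j}^T L_{\mathcal{I}_j \mathcal{I}_j} z_{\mathcal{I}_j} + 2\, z_{\mathcal{I}_i}^T L_{\mathcal{I}_i,:}\, z$. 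Expanding both $z^T L z$ and $2\, z_{\mathcal{I}_i}^T L_{\mathcal{I}_i,:}\, z$ into blocks against the partition $(\mathcal{I}_i, \mathcal{I}_{-i})$ and invoking the symmetry of $L$, the coefficients of $z_{\mathcal{I}_i}^T L_{\mathcal{I}_i \mathcal{I}_i} z_{\mathcal{I}_i}$ and $z_{\mathcal{I}_i}^T L_{\mathcal{I}_i \mathcal{I}_{-i}} z_{\mathcal{I}_{-i}}$ both sum to zero, leaving an expression in $z_{\mathcal{I}_{-i}}$, $\lambda$, and $(d_j)_{j \neq i}$ alone.

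The principal obstacle is this block-wise cancellation, which is the whole reason $\tilde{C}$ carries the specific correction in \eqref{eq:potential_social_cost}. Intuitively, Euler's identity on the degree-two part of $C$ doubles the self-block $z_{\mathcal{I}_i}^T L_{\mathcal{I}_i \mathcal{I}_i} z_{\mathcal{I}_i}$ appearing in the nodal payment, so $\tilde{C}$ must add back exactly one copy of that block per agent for the stagewise identity to balance. Once this algebraic cancellation is verified, Theorem~\ref{thm:main} directly yields that $\mathcal{D}$ is an MPG with potential $\phi_{\mathcal{D}}$.
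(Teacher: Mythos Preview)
Your proposal is correct and follows the same overall strategy as the paper (verify the three hypotheses of Theorem~\ref{thm:main}), but two execution choices differ in ways worth noting. First, you take the local state to be $s_i=(d_i,\alpha_i)$, whereas the paper uses the smaller $s_i=d_i$ and leaves all of $\alpha$ in $s_{-i}$; your choice is arguably more natural given the policy structure in \eqref{eq:local_policies}, but it obliges you to check that the residual $\phi_{\mathcal D}-U_i$ is also independent of $\alpha_i$ (which it is, since $(p_j,q_j)$ for $j\neq i$ depends on $\alpha_j$ but not $\alpha_i$). Second, for the quadratic cancellation the paper isolates a standalone identity (Lemma~\ref{lemma:quad_potential}) showing $\nabla_{z_{\mathcal I_i}}\bigl[z_{\mathcal I_i}^T\nabla_{z_{\mathcal I_i}}(z^TLz)\bigr]=\nabla_{z_{\mathcal I_i}}\bigl(z^TLz+\sum_j z_{\mathcal I_j}^TL_{\mathcal I_j\mathcal I_j}z_{\mathcal I_j}\bigr)$ and applies it to conclude the gradient vanishes, while you do the same cancellation by direct block expansion using symmetry of $L$. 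Your argument is slightly more elementary and self-contained; the paper's lemma has the advantage of packaging the key algebraic fact in a reusable form that makes the origin of the correction term in \eqref{eq:potential_social_cost} transparent.
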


\begin{proof}
    See appendix.
\end{proof}

Corollary \ref{cor:application} is instructive. Ideally, the agents would cooperate to maximize the discounted sum of the social benefit $\sum_{i\in\mathcal{N}}u_i(d_i,\tilde{p}_i) - C(p, q,\lambda)$. Instead, due to the market power associated with the losses and voltage prices, they minimize the discounted sum of $\phi_\mathcal{D}$, which includes the the additional second term in \eqref{eq:potential_social_cost}. In Section \ref{sec:case_study}, we will demonstrate that the effect of this additional term is small in practice.

\section{EXPERIMENT} \label{sec:case_study}
We evaluate the efficiency of the proposed market on a benchmark IEEE network populated with autonomous storage units. The equilibrium policies are compared with a more naive non-nodal RTP pricing structure as well as the theoretical socially optimal policies. Our analysis is from a mechanism design perspective, meaning that we are only interested in evaluating equilibria and do not seek to model the learning process. As such, we use centralized computation to compute an equilibrium and assume that, in practice, prosumers will find it in the process of optimizing their local policies.

\subsection{Algorithm}
For MPGs, the potential value function is simply the discounted sum of the potential function. Therefore, the problem of finding a local equilibrium reduces to the problem of computing a locally optimal policy for an MDP. Given that we have access to the transition density and reward functions, we choose to apply a simple
\textit{value gradient} algorithm inspired by \cite{heess_learning_2015} instead of more popular model-free policy gradient algorithms. Value gradient algorithms depend on the reparameterization trick, where the policy and transition densities are expressed as deterministic functions of independent random variables:
\begin{subequations} \label{eq:reparameterization}
\begin{align}
a^t &= \mu_\theta(s^t,a^t,\eta^t) \\
s^{t+1} &= h(s^t,a^t,\zeta^{t+1}) \\
\label{eq:sampling}\eta^t&\sim\rho_\eta, ~ \zeta^t\sim\rho_\zeta
\end{align}
\end{subequations}
For the game of interest $\mathcal{D}$, the transition density defined in \eqref{eq:der_dynamics}, \eqref{eq:ex_dynamics}, and \eqref{eq:noise} as well as the policies \eqref{eq:local_policies} are already written in reparameterized form; we reuse $\mu$ and $\eta$ given the one-to-one correspondence between their use in the general case and their use in $\mathcal{D}$. Given an MPG with potential function $\phi$, define the estimated potential value function
\begin{equation} \label{eq:estimator}
\hat{\Phi}^\theta(s^0,H,\zeta,\eta) \coloneqq \sum_{t=0}^H\phi(s^t,a^t)
\end{equation}
where $s^t$, $t\geq1$ and $a^t$ are generated by \eqref{eq:reparameterization}. $\hat{\Phi}^\theta$ is an unbiased estimate of the potential value function
\[
\mathbb{E}\left[\hat{\Phi}^\theta(s^0,H,\zeta,\eta)\right] = \Phi^\theta
\]
where the expectation is taken with respect to \eqref{eq:sampling}, $s^0\sim\rho_0$, and $H \sim G(1-\gamma)$. By extension, the gradient of the estimator is an unbiased estimate of the gradient of the potential value function:
\[
\mathbb{E}\left[\nabla_\theta\hat{\Phi}^\theta(s^0,H,\zeta,\eta)\right] = \nabla_\theta\Phi^\theta.
\]
This leads to Algorithm \ref{alg:sga} for computing local equilibria, which is equivalent to stochastic gradient ascent (SGA) on the potential value function.
\begin{algorithm}[t!]
\caption{SGA} \label{alg:sga}
\begin{algorithmic}[1]
    \REQUIRE S-MPG $\mathcal{G}$, $\gamma$, $\rho_\xi$, $\rho_\eta$, $\rho_0$, $\gamma$, $\beta$, $N_\text{train}$, $N_\text{batch}$
    \STATE Define $\hat{\Phi}^\theta$ from $\mathcal{G}$
    \STATE Arbitrarily initialize $\theta$
    \FOR{$1$ \TO $N_{\text{train}}$}
    \FOR{$1$ \TO $N_{\text{batch}}$}
    \STATE Sample $s^0\sim\rho_0$, $H \sim G(1-\gamma)$, $\zeta_t\sim\rho_\zeta$, $\eta_t\sim\rho_\eta$
    \STATE\label{step:gradient} $\hat{\nabla}\gets\hat{\nabla} + \frac{1}{N_\text{batch}}\nabla_\theta \hat{\Phi}^\theta(s_0,H,\zeta,\eta)$
    \ENDFOR
    \STATE $\theta\gets\theta + \beta\hat{\nabla}$
    \ENDFOR
    \RETURN $\theta$
\end{algorithmic}
\end{algorithm}
Step \ref{step:gradient} can be computed by backpropogation through the Bellman equation as in \cite[Sec. 4.1]{heess_learning_2015}; automatic differentiation software such as PyTorch can perform this computation off-the-shelf.

The update rule in Step \ref{step:gradient} is equivalent to the stochastic gradient play update rule. As $N_{\text{batch}}\rightarrow\infty$, Algorithm \ref{alg:sga} recovers the deterministic gradient ascent algorithm \eqref{eq:ga}, which is in turn equivalent to the gradient play algorithm \eqref{eq:gp} as discussed in Section \ref{sec:mpg}. As established, the scope of this work is limited to computing and evaluating a local equilibrium in the interest of mechanism design. However, consider a more practical multi-agent reinforcement learning setting where agents compute unbiased estimates of their policy gradients in a decentralized, model-free manner using the policy gradient theorem \cite{sutton_policy_1999}. In expectation (with large batch sizes), this multi-agent policy gradient algorithm is equivalent to Algorithm \ref{alg:sga}. Therefore, Algorithm \ref{alg:sga} may mirror the learning process of multi-agent systems when the sufficient conditions in Theorem \ref{thm:main} hold. See \cite{zhang_global_2020} for a convergence analysis of policy gradient algorithms for continuous state and action spaces (single-agent policy gradient results suffice in the MPG setting).

\subsection{Setup}
We demonstrate the proposed distribution grid market on the IEEE 18-bus test case from \cite{grady_application_1992} with 1-hour timesteps. Each prosumer owns a single battery energy storage unit, such as a Tesla Powerwall, with its internal state being the current state-of-charge. A storage unit's charging is constrained by its maximum state-of-charge as well as its inverter capacity. Customer utility functions are set to zero, so they focus only on price arbitrage. The substation LMP and inelastic demand profiles are simple noisy sinusoids with 1-day periods, with the demand profiles staggered such that they peak (on average) six hours before the LMP. Loads are scaled until losses reach approximately 10\% of the total power flow, consistent with a heavily loaded distribution network. The local policies \eqref{eq:local_policies} are Gaussian with the mean parameterized as a simple affine function of the states. The experimental setup is further detailed in
the appendix.

\subsection{Performance}
We train three joint policies for comparison. All three policies are trained by Algorithm \ref{alg:sga} with $N_{\text{batch}}=1$, $N_{\text{train}}=500$, $\beta=.001$, and the policy parameters $\theta$ initialized to the zero vector. The trained policies are:
\begin{enumerate}
    \item Equilibrium (EQ) policies, trained to minimize the potential value function by setting $\phi=\phi_\mathcal{D}$ in \eqref{eq:estimator}.
    \item Socially optimal (SO) policies, computed by setting $\phi$ in \eqref{eq:estimator} to the social welfare value function
    \[
    W^\theta \coloneqq \mathbb{E}_\pi\left[\sum_{t=0}^\infty\gamma^t\left(\sum_{i\in\mathcal{N}}u_i(d_i^t,\tilde{p}_i^t)-C(p^t,\lambda^t)\right)\right].
    \]
    \item Policies under uniform pricing (UN), trained identically to the EQ policies but with the network removed by setting $r = x = 0$.
\end{enumerate}
\begin{figure}[t!]
    \centering
    \includegraphics[width=\linewidth]{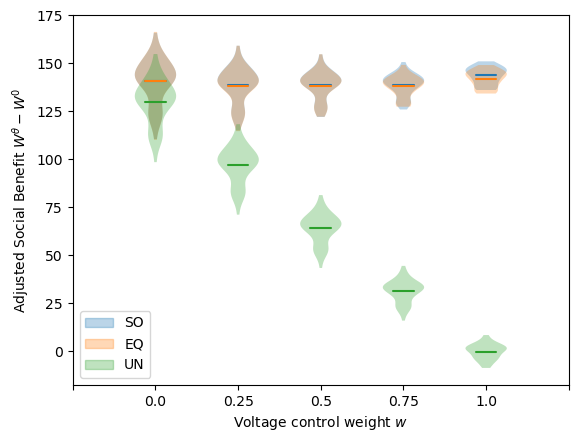}\vspace{-1em}
    \caption{Relative performance of the EQ, SO, and UN policies. EQ is almost equivalent to SO and is more socially beneficial than UN pricing regardless of the tradeoff between import costs and voltage control.
    }\vspace{-1em}
    \label{fig:performance}
\end{figure}
\begin{figure}[t!]
    \centering
    \includegraphics[width=\linewidth]{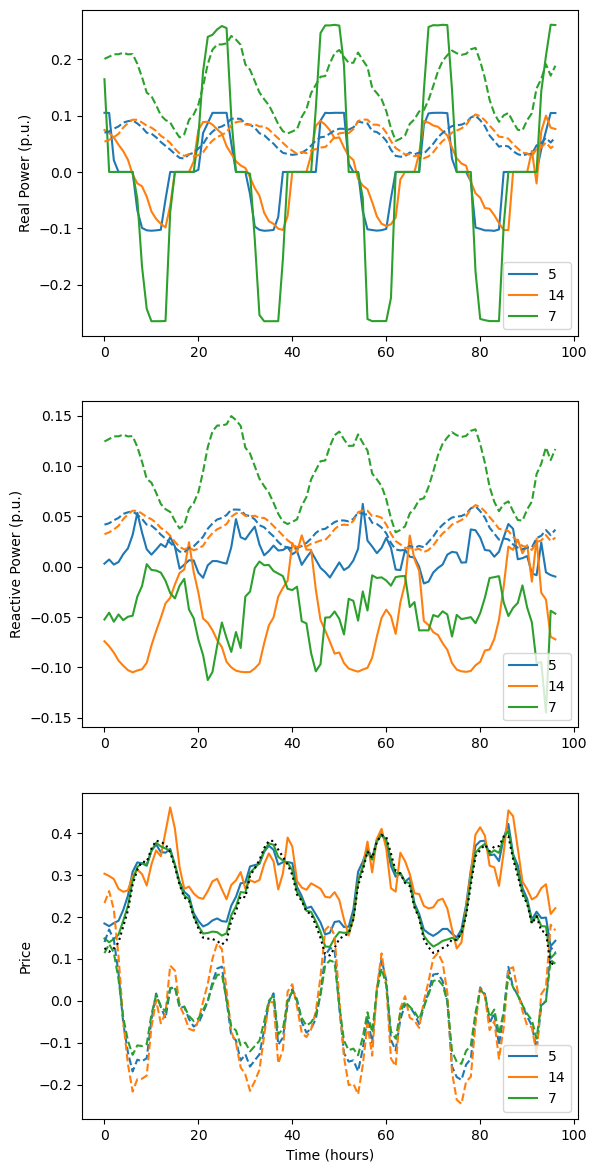}\vspace{-1em}
    \caption{Equilibrium behavior of three random agents over four-day rollout.}\vspace{-1em}
    \label{fig:demo}
\end{figure}
Fig. \ref{fig:performance} shows the approximate adjusted social welfare value function $W^\theta - W^0$ for each policy across five values of the voltage control weight $w$\footnote{Because lower values of $w$ deprioritize voltage control, the losses approximation in \eqref{eq:dso_cost} may only be well-justified for sufficiently high $w$. The low $w$ cases in Fig. \ref{fig:performance} are meant to explore the strategic implications of the game, rather than to serve as realistic simulations.}. $W^0$ is the social welfare value function when the prosumers idle their storage units at every timestep. The adjustment is necessary because $W^0$ appears as a constant term in the expanded form of $W^\theta$, so including it would distort the relative performance of the policies. The approximate social welfare value function is the empirical average over 50 discounted rollouts of length 500 ($.99^{500} < .01)$. To allow for a one-to-one comparison, the same 50 trajectories of the random variables $\xi$ and $\omega$ were used for all 15 evaluations. Fig. \ref{fig:performance} visualizes the distributions of the 50 rollouts and their means, indicated by horizontal lines. All policies converged by the end of the 500 training rollouts.

UN is the equilibrium outcome when the DSO ignores the network and simply forwards the LMP to each prosumer. In this aggregated context, the prices no longer vary by node, so there is no market power. The EQ policies learn to respond to nodal prices reflecting network-related social costs. However, in theory, they may exploit their market power to benefit themselves at the expense of their neighbors, thereby failing to meaningfully increase social welfare compared to the network-blind UN policies. The results in Fig. \ref{fig:performance} indicate that this is not the case: the EQ policies outperform the UN policies by at least 8\% across all five voltage control weights.

The SO policies capture the ficticious best-case-scenario where all devices cooperatively maximize social welfare instead of responding to prices. Fig. \ref{fig:performance} shows that the price-of-anarchy, given by the gap between EQ and SO, is negligible relative to the gap between EQ and UN
\footnote{While it is difficult to see from the figure, SO did indeed slightly outperform EQ for all five voltage control weights, as expected.}. For this benchmark network, the takeaway is that the benefits of the proposed nodal pricing mechanism outweigh the costs---in fact, it is almost as efficient as the best possible coordination mechanism given the local policy structure.

\subsection{Demonstration}
To build intuition, we include a four-day simulation of the deterministic EQ policies with $w=0.75$ in Fig. \ref{fig:demo}. Three randomly chosen agents are presented. The first and second subplots show the real and reactive power load, with inelastic demand as a dashed line and storage consumption as a solid line. The third subplot shows the nodal prices for real and reactive power as solid and dashed lines, respectively, with the substation LMP included as a dotted black line. The nodal pricing mechanism creates different incentive structures and, by extension, different behaviors between the prosumers. Fundamentally, the agents must balance arbitraging over the substation LMP, minimizing losses, and stabilizing voltage magnitudes, given the apparent power capacity of their inverters.

During the four-day simulation, voltage magnitudes fluctuated between $\pm0.10$ p.u. and network losses ranged from about 2\% and 12\% of the total power flow through the substation.

\section{CONCLUSION}
We propose a practical pricing scheme for DER coordination in distribution networks, accounting for both import costs and voltage stability. In order to compute equilibria for the resulting model, we characterize new generalizable sufficient conditions under which a stochastic game is a Markov potential game and prove that our application satisfies these new conditions. Finally, the proposed mechanism is shown to be efficient on a benchmark distribution network. Interesting directions for future research include accounting for the affect of the aggregate distribution network load on the LMP and rigorously bounding the price-of-anarchy.



\bibliographystyle{ieeetr}
\bibliography{references}

\newpage
\begin{appendix}

\subsection{Additional Notation}
$I_k$ is the identity matrix of dimension $k$. $\mathbf{1}$, $\mathbf{0}$, and $e_k$ are the matrix of all ones, the matrix of all zeros, and the $k$th standard basis vector, respectively, with dimensions inferred from context. $U(\underline{x},\overline{x})$ is the uniform distribution between its $\underline{x}$ and $\overline{x}$. $N(\mu,\Sigma)$ is the multivariate normal distribution with mean $\mu$ and covariance $\Sigma$.

\subsection{Omitted Proofs} \label{sec:proofs}

\subsubsection{Proof of Theorem \ref{thm:main}}

For each agent $i\in\mathcal{N}$, define the \textit{dummy function}
\begin{equation} \label{eq:dummy_def}
\psi_i(s,a) = U_i(s,a) - \phi(s,a).
\end{equation}
Sufficient condition \ref{it:stagewise} implies that the dummy function does not depend on the local state or action:
\begin{equation} \label{eq:independent_dummy}
\psi_i(s_i,s_{-i},a_i,a_{-i}) = \psi_i(s_{-i},a_{-i}).
\end{equation}
Combining \eqref{eq:value_def}, \eqref{eq:dummy_def}, and \eqref{eq:independent_dummy}, agent $i$'s value function can be decomposed as follows
\begin{equation} \label{eq:value_decomposition}
    V_i^\pi = \mathbb{E}_\pi\left[\sum_{t=0}^\infty\gamma^t\phi(s^t,a^t)\right] + \mathbb{E}_\pi\left[\sum_{t=0}^\infty\gamma^t\psi_i(s_{-i}^t,a_{-i}^t)\right].
\end{equation}
Notice that the first term is the desired potential value function $\Phi^\pi$ from Definition \ref{def:mpg}. To satisfy Definition \ref{def:mpg}, we need to show that the second term in \eqref{eq:value_decomposition} does not depend on $\pi_i$. First, bring the expectation inside the summation:
\[
\mathbb{E}_\pi\left[\sum_{t=0}^\infty\gamma^t\psi(s_{-i}^t,a_{-i}^t)\right] = \sum_{t=0}^\infty\gamma^t\mathbb{E}_\pi\left[\psi(s_{-i}^t,a_{-i}^t)\right].
\]
Clearly, it suffices to show that $\mathbb{E}_\pi\left[\psi(s_{-i}^t,a_{-i}^t)\right]$ is independent of $\pi_i$ for all $t$. For compactness, we write $\int_{s_i\in\mathcal{S}_i}\cdot ds_i$ simply as $\int_{s_i}\cdot ds_i$, and likewise for the integral over the nonlocal state space $\mathcal{S}_{-i}$ and relevant subsets of the action space.
\begin{align*}
    &\mathbb{E}_\pi\left[\psi(s_{-i}^t,a_{-i}^t)\right] \\
    =& \int_{s}\Pr_\pi(s^t=s)\int_{a_{-i}}\pi_{-i}(a_{-i}|s)\psi_i(s_{-i},a_{-i})da_{-i}ds \\
    =& \int_{s_{-i}}\Pr_\pi(s_{-i}^t=s_{-i})\int_{s_i}\Pr_\pi(s_i^t=s_i|s_{-i}^t=s_{-i}) \\
    &\int_{a_{-i}}\pi_{-i}(a_{-i}|s)\psi_i(s_{-i},a_{-i})da_{-i}ds_ids_{-i}
\end{align*}
where $\Pr_\pi(\cdot)$ is shorthand for the probability with respect to where $s^0\sim\rho_0$, $a^t\sim\pi(\cdot|s^t)$, and $s^{t+1} \sim T(\cdot|s^t,a^t)$. We now apply sufficient condition \ref{it:independent_policy} to rewrite $\pi_{-i}$ without the dependence on $s_i$:
\begin{align*}
    =& \int_{s_{-i}}\Pr_\pi(s_{-i}^t=s_{-i})\int_{s_i}\Pr_\pi(s_i^t=s_i|s_{-i}^t=s_{-i}) \\
    &\int_{a_{-i}}\pi_{-i}(a_{-i}|s_{-i})\psi_i(s_{-i},a_{-i})da_{-i}ds_ids_{-i} \\
    =& \int_{s_{-i}^t}\Pr_\pi(s_{-i}^t=s_{-i})\int_{a_{-i}^t}\pi_{-i}(a_{-i}^t|s_{-i}^t)\psi_i(s_{-i}^t,a_{-i}^t) \\
    &\int_{s_i^t}\Pr_\pi(s_i^t|s_{-i}^t)ds_i^tda_{-i}^tds_{-i}^t \\
    =& \int_{s_{-i}^t}\Pr_\pi(s_{-i}^t=s_{-i}) \\
    &\int_{a_{-i}^t}\pi_{-i}(a_{-i}^t|s_{-i}^t)\psi_i(s_{-i}^t,a_{-i}^t)da_{-i}^tds_{-i}^t
\end{align*}
The only remaining nominal dependence on $\pi$ is through the term $\Pr_\pi(s_{-i}^t=s_{-i})$. We show that this term also does not depend on $\pi$ by induction. Suppose that $\Pr_\pi(s_{-i}^t=s_{-i})$ is independent of $\pi$ for some $t$. In general, we have
\begin{align*}
    \Pr_\pi(s_{-i}^{t+1} = s_{-i}') =& \int_{s}\Pr_{\pi}(s^t=s)\int_{a}\pi(a^t|s^t) \\
    &\int_{s_i'}T(s_i',s_{-i}'|s,a)ds_i'dads
\end{align*}
Applying sufficient condition \ref{it:marginal_transition} gives
\begin{align*}
    =& \int_{s}\Pr_{\pi}(s^t=s)\int_{a}\pi(a|s)T_{-i}(s_{-i}'|s_{-i},a_{-i})da_ida_{-i}ds
\end{align*}
Decomposing the inner integral,
\begin{align*}
    =& \int_{s}\Pr_{\pi}(s^t=s)\int_{a_{-i}}\pi_{-i}(a_{-i}|s)T_{-i}(s_{-i}'|s_{-i},a_{-i}) \\
    &\int_{a_i}\pi_i(a_i|s^t)da_ida_{-i}ds \\
    =& \int_{s}\Pr_{\pi}(s^t=s)\int_{a_{-i}}\pi_{-i}(a_{-i}|s)T_{-i}(s_{-i}'|s_{-i},a_{-i})da_{-i}ds
\end{align*}
Decomposing the outer integral,
\begin{align*}
    =& \int_{s_{-i}}\Pr_{\pi}(s_{-i}^t=s_{-i})\int_{s_i}\Pr_\pi(s_i^t=s_i|s_{-i}^t=s_{-i}) \\
    &\int_{a_{-i}}\pi_{-i}(a_{-i}|s)T_{-i}(s_{-i}'|s_{-i},a_{-i})da_{-i}ds_ids_{-i}
\end{align*}
Applying sufficient condition \ref{it:independent_policy},
\begin{align*}
    =& \int_{s_{-i}}\Pr_{\pi}(s_{-i}^t=s_{-i})\int_{s_i}\Pr_\pi(s_i^t=s_i|s_{-i}^t=s_{-i}) \\
    &\int_{a_{-i}}\pi_{-i}(a_{-i}|s_{-i})T_{-i}(s_{-i}'|s_{-i},a_{-i})da_{-i}ds_ids_{-i} \\
    =& \int_{s_{-i}}\Pr_{\pi}(s_{-i}^t=s_{-i})\int_{a_{-i}}\pi_{-i}(a_{-i}|s_{-i})T_{-i}(s_{-i}'|s_{-i},a_{-i}) \\
    &\int_{s_i}\Pr_\pi(s_i^t=s_i|s_{-i}^t=s_{-i})ds_ida_{-i}ds_{-i} \\
    =& \int_{s_{-i}}\Pr_{\pi}(s_{-i}^t=s_{-i}) \\
    &\int_{a_{-i}}\pi_{-i}(a_{-i}|s_{-i})T_{-i}(s_{-i}'|s_{-i},a_{-i})da_{-i}ds_{-i} \\
\end{align*}
The only remaining nominal dependence on $\pi_i$ is through the term $\Pr_\pi(s_{-i}^t=s_{-i})$, which is independent of $\pi_i$ by the inductive hypothesis. For the base case $t=0$, note that $\Pr_\pi(s_{-i}^0=s_{-i})$ does not depend on $\pi$ since the initial state is drawn directly from the distribution $\rho_0$.

\subsubsection{Proof of Corollary \ref{cor:application}}
We will use the following lemma.
\begin{lemma} \label{lemma:quad_potential}
    For any vector $v\in\mathbb{R}^\ell$, matrix $Q\in\mathbb{R}^{\ell \times \ell}$, and partition of indices $\mathcal{I}=\{\mathcal{I}_1,\dots,\mathcal{I}_J\}$ such that $\bigcup_{k=1}^J\mathcal{I}_k=[\ell]$ and $\mathcal{I}_i\cap\mathcal{I}_j=\emptyset$ for all $i\neq j$,
    \[
    \nabla_{v_{\mathcal{I}_i}}\left[v_{\mathcal{I}_i}^T\nabla_{v_{\mathcal{I}_i}}(v^T Q v)\right] = \nabla_{v_{\mathcal{I}_i}}\left(v^TQv+\sum_{j=1}^Nv_{\mathcal{I}_j}^TQ_{\mathcal{I}_j\mathcal{I}_j}v_{\mathcal{I}_j}\right).
    \]
\end{lemma}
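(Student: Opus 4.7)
\textbf{Proof proposal for Lemma \ref{lemma:quad_potential}.} My plan is to decompose the quadratic form $v^T Q v$ by the degree of its terms in $v_{\mathcal{I}_i}$ and then invoke Euler's identity for homogeneous polynomials. The disjointness of the partition, $\mathcal{I}_j \cap \mathcal{I}_k = \emptyset$ for $j \neq k$, is what makes the decomposition clean: no cross-block term $v_{\mathcal{I}_j}^T Q_{\mathcal{I}_j \mathcal{I}_k} v_{\mathcal{I}_k}$ with $j \neq k$ can share an index with the block $\mathcal{I}_i$ on both sides, so the only purely quadratic contribution in $v_{\mathcal{I}_i}$ comes from the diagonal block $Q_{\mathcal{I}_i \mathcal{I}_i}$.

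First I would write $v^T Q v = f_2 + f_1 + f_0$, where
\[
f_2 \coloneqq v_{\mathcal{I}_i}^T Q_{\mathcal{I}_i \mathcal{I}_i} v_{\mathcal{I}_i},
\]
\[
f_1 \coloneqq \sum_{j \neq i} \bigl(v_{\mathcal{I}_i}^T Q_{\mathcal{I}_i \mathcal{I}_j} v_{\mathcal{I}_j} + v_{\mathcal{I}_j}^T Q_{\mathcal{I}_j \mathcal{I}_i} v_{\mathcal{I}_i}\bigr),
\]
and $f_0 \coloneqq \sum_{j,k \neq i} v_{\mathcal{I}_j}^T Q_{\mathcal{I}_j \mathcal{I}_k} v_{\mathcal{I}_k}$ collects all terms independent of $v_{\mathcal{I}_i}$. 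Each $f_d$ is homogeneous of degree $d$ in $v_{\mathcal{I}_i}$, so Euler's identity $v_{\mathcal{I}_i}^T \nabla_{v_{\mathcal{I}_i}} f_d = d\, f_d$ gives
\[
v_{\mathcal{I}_i}^T \nabla_{v_{\mathcal{I}_i}}(v^T Q v) = 2 f_2 + f_1,
\]
so the left-hand side of the lemma equals $\nabla_{v_{\mathcal{I}_i}}(2 f_2 + f_1)$.

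Next I would evaluate the right-hand side. In the extra sum $\sum_{j} v_{\mathcal{I}_j}^T Q_{\mathcal{I}_j \mathcal{I}_j} v_{\mathcal{I}_j}$ only the $j = i$ term depends on $v_{\mathcal{I}_i}$, and that term is exactly $f_2$. Hence the argument of the gradient on the right-hand side is $(f_2 + f_1 + f_0) + f_2 = 2 f_2 + f_1 + f_0$, and since $f_0$ is constant in $v_{\mathcal{I}_i}$ its gradient vanishes, yielding the same expression $\nabla_{v_{\mathcal{I}_i}}(2 f_2 + f_1)$ and matching the left-hand side.

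The only step that requires care is verifying Euler's identity on $f_1$: the gradient $\nabla_{v_{\mathcal{I}_i}} f_1$ produces both $Q_{\mathcal{I}_i \mathcal{I}_j} v_{\mathcal{I}_j}$ and $Q_{\mathcal{I}_j \mathcal{I}_i}^T v_{\mathcal{I}_j}$ contributions, and one must use that $v_{\mathcal{I}_i}^T Q_{\mathcal{I}_j \mathcal{I}_i}^T v_{\mathcal{I}_j}$ equals its own transpose $v_{\mathcal{I}_j}^T Q_{\mathcal{I}_j \mathcal{I}_i} v_{\mathcal{I}_i}$ (as a scalar) to recover exactly $f_1$. No symmetry hypothesis on $Q$ is needed. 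The lemma's ``extra'' sum of diagonal-block quadratics is then revealed as precisely the correction that turns $\nabla_{v_{\mathcal{I}_i}}(f_2 + f_1)$ into $\nabla_{v_{\mathcal{I}_i}}(2 f_2 + f_1)$, which is what Euler's identity outputs on the left-hand side.
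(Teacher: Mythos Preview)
Your argument is correct, but it differs from the paper's. The paper introduces selection matrices $E_j$ (the rows of the identity indexed by $\mathcal{I}_j$) and computes both sides by direct matrix calculus, reducing each to $E_i(Q+Q^T)(I+E_i^TE_i)v$ via the identities $E_iE_i^T = I$ and $E_iE_j^T = 0$ for $i\neq j$. Your route is more structural: you split $v^TQv$ by homogeneity degree in $v_{\mathcal{I}_i}$ and let Euler's identity do the work, which makes transparent \emph{why} the extra diagonal-block sum is exactly the right correction---it simply doubles the degree-two part $f_2$ so that the gradient of $2f_2+f_1$ appears on both sides. The paper's computation is mechanical and yields an explicit closed form for the common value, while your argument is shorter, avoids selection-matrix bookkeeping, and would generalize immediately to higher-degree polynomial costs. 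Neither proof requires symmetry of $Q$, as you note.
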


\begin{proof}
    The expressions are shown to be equivalent through expansion. To avoid clutter, we will write the subscript $\mathcal{I}_i$ as simply $i$. We will use the $E_i$ to denote the matrix composed of the rows of the identiy matrix indexed by $\mathcal{I}_i$. Beginning with the left-hand side:
    \begin{align*}
        &\nabla_{v_i}\left[v_i^T\nabla_{v_i}(v^T Q v)\right] \\
        &= E_i\nabla\left[(E_iv)^TE_i\nabla(v^TQv)\right] \\
        &= E_i\nabla\left[v^TE_i^TE_i(Q+Q^T)v\right] \\
        &= E_i\left[E_i^TE_i(Q+Q^T) + (Q+Q^T)E_i^TE_i\right]v \\
        &= E_i(Q+Q^T)(I+E_i^TE_i)v
    \end{align*}
    where we use the fact that $E_iE_i^T=I$. For the right-hand side:
    \begin{align*}
        &\nabla_{v_i}\left(v^T Qv+\sum_{j\in\mathcal{I}}v_j^TQ_{jj}v_j\right) \\
        &= E_i\nabla\left[v^T \left(Q+\sum_{j\in\mathcal{I}}E_j^TE_jQE_j^TE_j\right)v\right] \\
        &=E_i\left[(Q+Q^T) + \sum_{j\in\mathcal{I}}E_j^TE_j(Q+Q^T)E_j^TE_j\right]v \\
         &= E_i(Q+Q^T)(I+E_i^TE_i)v
    \end{align*}
    where we use the fact that $E_iE_j=0$ when $i \neq j$.
\end{proof}

We now prove Corollary \ref{cor:application} by checking the sufficient conditions from Theorem \ref{thm:main} in sequence, starting with condition \ref{it:stagewise}. For each agent $i\in\mathcal{N}$, define the local state $s_i=d_i$. The property \eqref{eq:stagewise_potential} holds if and only if $\phi_\mathcal{D}(d,\lambda,a) - U_i(d,\lambda,a)$ does not depend on the local state $d_i$ or the local action $a_i$. Beginning with the local state, we have
\begin{align*}
    &\phi_\mathcal{D}(d,\lambda,a) - U_i(d,\lambda,a) =  \sum_{j\in\mathcal{N}}u_j(d_j,\tilde{p}_j,\tilde{q}_j) - \tilde{C}(p,q,\lambda) \\
    &- \left(u_i(d_i,\tilde{p}_i,\tilde{q}_i) - p_i\frac{\partial}{\partial p_i}C(p,q,\lambda) - q_i\frac{\partial}{\partial q_i}C(p,q,\lambda)\right) \\
    &= \sum_{j \neq i}u_j(d_j,\tilde{p}_j,\tilde{q}_j) - \tilde{C}(p,q,\lambda) \\
    &- \left(- p_i\frac{\partial}{\partial p_i}C(p,q,\lambda) - q_i\frac{\partial}{\partial q_i}C(p,q,\lambda)\right)
\end{align*}
which does not include $d_i$ as desired. We now want to show the final expression does not depend on the action $a_i$ by way of $(\tilde{p}_i,\tilde{q}_i)$ or, by extension, $(p_i,q_i)$. Clearly, the summation of the other agents' utilities does not depend on $a_i$, so we need only consider the other terms where $(p_i,q_i)$ appears. Checking the gradient of these terms:
\begin{align*}
    &\nabla_{(p_i,q_i)}\left(p_i\frac{\partial}{\partial p_i}C(p,q,\lambda) + q_i\frac{\partial}{\partial q_i}C(p,q,\lambda) - \tilde{C}(p,q,\lambda)\right) \\
    &= \nabla_{(p_i,q_i)}\left[\begin{bmatrix}
        p_i \\ q_i
    \end{bmatrix}^T\nabla_{(p_i,q_i)}\left(\begin{bmatrix}
        p \\ q
    \end{bmatrix}^TL(\lambda)\begin{bmatrix}
        p \\ q
    \end{bmatrix}\right)\right] \\
    &- \nabla_{(p_i,q_i)}\left[\begin{bmatrix}
        p \\ q
    \end{bmatrix}^TL(\lambda)\begin{bmatrix}
        p \\ q
    \end{bmatrix} + \sum_{j\in\mathcal{N}}\begin{bmatrix}
        p_j \\ q_j
    \end{bmatrix}^TL_{\mathcal{I}_j\mathcal{I}_j}(\lambda)\begin{bmatrix}
        p_j \\ q_j
    \end{bmatrix}\right] \\
    &= 0.
\end{align*}
The first step uses the fact that
\[
C(p,q,\lambda) = (w-1)\lambda\sum_{i\in\mathcal{N}}p_i + \begin{bmatrix}
    p \\ q
\end{bmatrix}^T L(\lambda) \begin{bmatrix}
    p \\ q
\end{bmatrix},
\]
which can be checked from \eqref{eq:distflow_compact} and \eqref{eq:dso_cost}. The linear term cancels trivially and the second step applies Lemma \eqref{lemma:quad_potential}. We can conclude from the resulting equality that $\phi_\mathcal{D}(d,\lambda,a) - U_i(d,\lambda,a)$ does not depend on $a_i$, satisfying condition \ref{it:stagewise}.

The local policy of agent $j$ is given by
\[
\pi_j^{\theta_j}(a_j|s) = \rho_{\eta_j}\left(\left\{\eta_j:a_j=\mu_j^{\theta_j}(d_j,\alpha_j,\alpha_0,\eta_j)\right\}\right)
\]
When $j \neq i$, $d_i$ does not appear in this expression, so condition \ref{it:independent_policy} is satisfied.

By construction of the local state space, we have $s_{-i}=(d_{-i},\alpha)$. The marginal transition density of the nonlocal states is given by
\begin{align*}
    T_{-i}(s_{-i}'|s,a) = \rho_{\xi,\omega}(\{&(\xi,\omega): \\
    &d_j'=f_j(d_j,\tilde{p}_j,\tilde{q}_j,\omega_j) \quad \forall j\neq i, \\
   &\alpha_i' = g_j(\alpha_j,\xi_j) \quad \forall j\in\mathcal{N}\cup\{0\}\})
\end{align*}
Since neither the local state $d_i$ nor the local action $a_i$ (by way of $(\tilde{p}_i,\tilde{q}_i)$ or $(p_i,q_i)$) appear in this expression, condition \ref{it:marginal_transition} is satisfied and the proof is complete.

\subsection{Experimental Setup} \label{sec:setup}
A scalar state variable $d_i$ represents storage unit $i$'s state-of-charge. The state dynamics \eqref{eq:der_dynamics} are given by
\[
f_i(d_i^t,\tilde{p}_i^t) = d_i^t + \tilde{p}_i^t
\]
where $\tilde{p}_i^t$ is also a scalar, since the prosumer owns only a single DER. Note that the storage dynamics include no stochastic component (represented in the general game $\mathcal{G}$ by $\omega$). The feasible set \eqref{eq:der_constraints} is given by
\[
\mathcal{P}_i(d_i) = \{(\tilde{p}_i,\tilde{q}_i):-d_i\leq\tilde{p}_i\leq\overline{d_i}-d_i,\tilde{p_i}^2 + \tilde{q}_i^2 \leq b_i^2\}
\]
where $b_i$ is the apparent power inverter capacity and $\overline{d_i}$ is the maximum state-of-charge. We choose $M_i$ in \eqref{eq:mapping} to be a lazy projection operation that first clips the real component fo $a_i$ into the range $[-d_i, \overline{d_i}]$ and then projects the resulting action onto the 2-norm ball of radius $b_i$. It is also possible to perform a true projection, but doing so requires solving a convex optimization problem at every timestep and therefore takes much longer to train. Since the authors observed nearly identical results using the true and lazy projections, we choose the latter for ease of reproducibility.

The agents' utility functions $u_i$ are set to zero, so they will only seek to minimize their utility bills given their fixed inelastic demand. We do not include round-trip inefficiencies or battery degradation since the purpose of the example is to demonstrate the fundamental strategic features of the proposed model, but such details can easily be incorporated under the general framework in Section \ref{sec:prosumer_model}.

The agent policies are Gaussian with fixed variance where the mean is an affine function of the states:
\begin{subequations}
\begin{align} \label{eq:affine_policy}
    \mu_i^{\theta_i}(d_i^t,\alpha_i^t,\alpha_0^t,\eta_i^t) &= \theta_i^{d_i}d_i^t + \theta_i^{\alpha_i}\alpha_i^t + \theta_i^{\alpha_0}\alpha_0^t + \theta_i^0 + \eta_i^t \\
    \eta_i^t &\sim N(\mathbf{0},\Sigma_{\eta_i}).
\end{align}
\end{subequations}
While more complex neural-network-based policies could also be employed, we found that affine policies are sufficiently expressive for this simple application and exhibited more reliable training.

The inelastic demand and LMP profiles are sinusoids with periods of one day perturbed by normally-distributed random noise. Specifically, the dynamics \eqref{eq:ex_dynamics} are given by
\begin{align*}
    g_i(\alpha_i^t,\xi_i^{t+1}) &= A\alpha_i^t + B\xi_i^t \\
    \lambda^t &= \lambda^*(1 + \kappa m^T\alpha_0^t) \\
    (\bar{p}_i^t, \bar{q}_i^t) &= (\bar{p}_i^*, \bar{q}_i^*)(1 + \kappa m^T\alpha_0^t) \\
    \xi_i^t &\sim N(0,\Sigma_\xi)
\end{align*}
where
\begin{align*}
    A &= \begin{bmatrix}
        \begin{matrix}
            \cos\frac{\pi}{12} & -\sin\frac{\pi}{12} \\
        \sin\frac{\pi}{12} & \cos\frac{\pi}{12}
        \end{matrix} & \mathbf{0} \\
        \mathbf{0} & \begin{matrix}
            \mathbf{0} & 0 \\
            I_{\tau} & \mathbf{0}
        \end{matrix}
    \end{bmatrix} \\
    B &= \begin{bmatrix}
        0 & 0 & e_1^T
    \end{bmatrix}^T \\
    m &= \begin{bmatrix}
        1 & 0 & \sigma_\xi\mathbf{1}^T
    \end{bmatrix} \\
    \Sigma_\xi &= z\mathbf{1}\mathbf{1}^T + \diag((1-z)\mathbf{1}).
\end{align*}
$\kappa$ is the amplitude factor of the sinusoids. $\lambda^*$ and $\tilde{p}_i^*$ are the mean LMP and agent inelastic demand, respectively. $\tau$ is the noise duration, $\sigma_\xi$ is the standard deviation factor for the exogenous parameters, and $z$ is the correlation coefficient.

The initial state distribution is defined by
\begin{align*}
    d_i^0 &\sim U(0,\overline{d_i}) \\
    \alpha_i^0 &= \begin{bmatrix}
        \cos\frac{2\pi}{24}(t_0 + \delta_i) & \sin\frac{2\pi}{24}(t_0 + \delta_i) & 0 & 0 & 0
    \end{bmatrix} \\
    t_0 &\sim U(0,23) \\
    \delta_i &\sim U(\underline{\delta}, \overline{\delta}) \quad \forall i\in\mathcal{N} \\
    \delta_0 &= 0
\end{align*}
where $t_0$ represents the starting hour of the day of the simulation and $\delta_i$ is the phase difference between agent $i$'s inelastic demand and the LMP, randomized between agents to introduce heterogeneity.

The network model is the 18-bus radial distribution system from \cite{grady_application_1992}, one of seven distribution benchmark cases distributed with MATPOWER. One storage-equipped agent is located at each of the 15 load buses. To approximate a heavily-loaded distribution network with losses on the order of $10\%$, the nominal loads are tripled.

$\bar{p}^*$ and $\bar{q}^*$ are set to the (tripled) real and reactive nominal load values from the MATPOWER case file and $\lambda^*$ is set to 1. Branch parameters $r$ and $x$ are also taken from the MATPOWER case file. We further set $\kappa_0=1/2$, $\sigma_\xi=1/10$, $z=.9$, $\tau=3$, $\underline{\delta}=3$, and $\overline{\delta}=9$. For the storage units, we set $b_i=1.5\sqrt{\bar{p}_i^{*2} + \bar{q}_i^{*2}}$, $\overline{d_i} = 6\overline{\tilde{p}_i}$, and $\Sigma_{\eta_i}=\diag((\bar{p}_i^{*2},\bar{q}_i^{*2}))$. Finally, we set the discount factor $\gamma=0.99$.

\end{appendix}


\end{document}